\newcommand{\la}{\langle}
\newcommand{\ra}{\rangle}
\def\ket#1{| #1 \rangle}
\def\bra#1{\langle #1 |}
\def\kb#1#2{|#1\rangle\!\langle #2 |}
\newtheorem{defn}{Definition}
\newtheorem{prop}{Proposition}
\newtheorem{thm}{Theorem}
\newtheorem{lem}{Lemma}
\newtheorem{rmk}{Remark}
\newtheorem{cor}{Corollary}
\newcommand{\C}{{\mathbb C}}
\newcommand{\M}{{\mathbb M}}
\renewcommand{\H}{{\mathcal H}}
\renewcommand{\S}{{\mathcal S}}
\newcommand{\operp}{$\bigcirc$\kern-.91em{$\perp$}}
\newcommand{\vecop}{\operatorname{vec}}
\newcommand{\tr}{\operatorname{Tr}}
\newcommand{\spn}{\operatorname{span}}
\def\be{\begin{eqnarray}}
\def\ee{\end{eqnarray}}
\def\bee{\begin{eqnarray*}}
\def\eee{\end{eqnarray*}}
\begin{document}
\title[Operator Structures and One-Way LOCC Conditions]{Operator Structures and Quantum One-Way LOCC Conditions}
\author[D.W.Kribs, C.Mintah, M.Nathanson, R.Pereira]{David W. Kribs$^{1,2}$, Comfort Mintah$^{1}$, Michael Nathanson$^3$, Rajesh Pereira$^{1}$}

\address{$^1$Department of Mathematics \& Statistics, University of Guelph, Guelph, ON, Canada N1G 2W1}
\address{$^2$Institute for Quantum Computing and Department of Physics \& Astronomy, University of Waterloo, Waterloo, ON, Canada N2L 3G1}
\address{$^3$Department of Mathematics and Computer Science, Saint Mary's College of California, Moraga, CA, USA 94556}

\begin{abstract}
We conduct the first detailed analysis in quantum information of recently derived operator relations from the study of quantum one-way local operations and classical communications (LOCC). We show how operator structures such as operator systems, operator algebras, and Hilbert $C^*$-modules all naturally arise in this setting, and we make use of these structures to derive new results and new derivations of some established results in the study of LOCC. We also show that perfect distinguishability under one-way LOCC and under arbitrary operations is equivalent for several families of operators that appear jointly in matrix and operator theory and quantum information theory.
\end{abstract}

\subjclass[2010]{47L90, 46B28, 81P15, 81P45, 81R15}

\keywords{quantum states, local operations and classical communication, perfect distinguishability, operator system, operator algebra, separating vector, Hilbert module.}

\date{December 8, 2016}

\maketitle

\section{Introduction}
In quantum information theory, the paradigm of Local Operations and Classical Communications (LOCC) is of paramount interest for both theory and application. It is a way  of probing the interplay between locality and quantum entanglement, and its operational definition demonstrates the possibilities and limitations of distributed quantum algorithms. By LOCC, we indicate a situation in which a number of parties share an initial quantum state on which each party can perform local quantum operations. The parties are allowed to coordinate their measurements and compare their results by transmitting classical information. The present work will only concern itself with LOCC between {\it two} parties.

The set of LOCC operations on a bipartite system is notoriously difficult to characterize mathematically, and we sometimes restrict ourselves to the case of one-way LOCC, in which the communication is limited to one predetermined direction. This subset of LOCC operations includes many well-known protocols, such as quantum teleportation \cite{Teleportation}, but there are known examples of tasks which require two-way communication in order to be accomplished with LOCC\cite{GL03, bennett1999quantum, n13, OwariHayashi}.

A particularly instructive problem for LOCC operations is that of distinguishing pure quantum states. Given an unknown representative $\ket{\phi}$ from a  set of states $\S$,  $\ket{\phi}$ can always be identified using quantum operations if and only if the elements of $\S$ are mutually orthogonal. No such simple characterization has been found for general LOCC, although many results have been obtained for specific cases (such as \cite{Walgate-2002, Entanglement-horodecki, Ghosh-2001, Nathanson-2005}). In the case of one-way LOCC, a mathematical characterization in terms of operator relations was recently obtained in \cite{n13}. In this paper, we conduct the first detailed analysis of these relations in the context of quantum information. We show how operator structures such as operator systems, operator algebras, and Hilbert $C^*$-modules all naturally arise in this setting, and we make use of these structures to derive new results and new derivations of some established results in the study of LOCC. We also show that perfect distinguishability under one-way LOCC and under arbitrary operations are equivalent for several families of operators that appear jointly in matrix and operator theory and in quantum information theory.

The paper is organized as follows. In Section~\ref{S:operatorrelations} we give the mathematical description of LOCC in terms of quantum operations and derive the operator conditions for one-way LOCC. We follow this in Section~\ref{S:perfect} by showing that perfect distinguishability under one-way LOCC is equivalent to the same under arbitrary operations for selected families of operators. In Section~\ref{S:operatorseparating} we draw a connection between one-way LOCC and the study of operator algebras and separating vectors, based around an analysis of operator systems that naturally arise in the study. We conclude in Section~\ref{S:hilbertmodule} by considering in more detail a distinguished special case of one-way LOCC that generates Hilbert C$^*$-module structures.

\section{LOCC Description via Operator Relations}\label{S:operatorrelations}

A basic scenario in quantum communication occurs when two parties, Alice and Bob say, each in control of quantum system, represented on finite-dimensional Hilbert spaces $\H_A$ and $\H_B$. They are physically separated  so that their measurement protocols are restricted to those only using local quantum operations and classical communications (LOCC). Their joint system $\H_A \otimes \H_B$ has been prepared in a pure state from a known set of (possibly entangled) states $\S = \{ \ket{\psi_i}  \}$, and Alice and Bob would like to determine the value of $i$ only using LOCC.

We will focus on the problem of distinguishing these states perfectly. We will also specify our analysis to the class of ``one-way'' LOCC, taking our main motivation from the work \cite{n13}. In one-way LOCC, Bob can adapt his measurement based on classical information sent to him by Alice. As a simple example, consider the pair of Bell states;
\[
\ket{\psi_1} = \frac{1}{\sqrt{2}} \big( \ket{0}_A\ket{0}_B + \ket{1}_A\ket{1}_B \big)  \quad \ket{\psi_2} = \frac{1}{\sqrt{2}} \big( \ket{0}_A\ket{1}_B + \ket{1}_A\ket{0}_B \big).
\]
Suppose that Alice measures the first qubit and obtains the result 0, and then sends the result to Bob over a classical channel. Bob then measures the second qubit, and suppose he also obtains 0. The joint measurement outcome is thus $\ket{0}_A\ket{0}_B$, and hence Bob knows the state given to them was $\ket{\psi_1}$. In this example the protocol is especially simple, since Bob's measurement does not depend on Alice's outcome. In general, we will allow Bob to adapt his measurements depending on Alice's outcome.

Mathematically then, a one-way LOCC measurement is of the form $\M = \{ A_k \otimes B_{k,j} \}$, with the positive operators making up the measurement outcomes satisfying $\sum_k A_k = I_A$ and $\sum_{j} B_{k,j} = I_B$ for each $k$. If the outcome $A_k \otimes B_{k,j}$ is obtained for any $k$, the conclusion is that the prepared state was $\ket{\psi_j}$. In the simple example above, we have $A_1 = \kb00$, $A_2 = \kb11$, and $B_{i,j} = A_{2-\delta_{ij}}$ (where $\delta_{ij}$ is the Kronecker delta). Note that this rule will perfectly distinguish the states if for all $k$ and $i\neq j$, 
\[
\bra{\psi_i} A_k \otimes B_{k,j} \ket{\psi_i} = 0.
\]
If $\ket{a\otimes b}$ is an eigenvector of $A_k \otimes B_{k,j}$ with non-zero eigenvalue, this implies $|\la\psi_i | a \otimes b\ra|^2 =0$. Hence, without loss of generality, we can assume each $A_k$ is a scalar multiple of a rank one projection.

We are now prepared to prove the matrix theoretic characterization from \cite{n13} of one-way LOCC, with algebraic relations that give the main motivation for our subsequent analysis. Notationally, we shall use $\ket{\Phi}$ to denote the standard maximally entangled state $$\ket{\Phi} = \frac{1}{\sqrt{d}}\big( \ket{00}+\ldots + \ket{d-1 \, d-1}\big),$$ on two-qudit Hilbert space $\C^d\otimes\C^d$. We also note a distinguished special case of the following hypotheses, by recalling that any two-qudit maximally entangled state can be written as $(I\otimes U)\ket{\Phi}$ where $I$ is the identity operator and $U$ is a unitary on $\C^d$.

\begin{prop}\label{prop1}
Let $\S = \{ \ket{\psi_i}= (I\otimes M_i)\ket{\Phi}  \} \subseteq \C^d\otimes\C^d$ be a set of orthogonal states. Then the following statements are equivalent:
\begin{itemize}
\item[$(i)$] The elements of $\S$ can be perfectly distinguished with one-way LOCC.
\item[$(ii)$] There exists a set of states $\{\ket{\phi_k}\}_{k=1}^r\subseteq\C^d$ and positive numbers $\{ m_k \}$ such that $\sum_{k} m_k \kb{\phi_k}{\phi_k} = I$ and for all $k$ and $i \ne j$, 
\begin{equation}\label{maineqn}
\bra{\phi_k} M^*_j M_i \ket{\phi_k} = 0.
\end{equation}

\item[$(iii)$] There is a $d \times r$ partial isometry matrix $W$ such that $WW^* = I_d$, and for all $i \ne j$, every diagonal entry of the $r \times r$ matrix $W^*M_j^*M_iW$ is equal to zero.
\end{itemize}
\end{prop}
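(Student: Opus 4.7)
The plan is to exploit the Choi-like identity $(X \otimes I)\ket{\Phi} = (I \otimes X^T)\ket{\Phi}$ for the standard maximally entangled state, together with the consequence $\bra{\Phi}(X \otimes Y)\ket{\Phi} = \tfrac{1}{d}\tr(X^T Y)$, and the reduction noted just before the statement that one may take each Alice-side POVM element to be $A_k = m_k\kb{a_k}{a_k}$ with $m_k > 0$ and $\ket{a_k}$ a unit vector, so $\sum_k A_k = I_A$ becomes $\sum_k m_k\kb{a_k}{a_k} = I$. Setting $\ket{\phi_k} := \ket{\bar{a}_k}$ (componentwise complex conjugate), this resolution of identity on the $A_k$'s is equivalent to $\sum_k m_k\kb{\phi_k}{\phi_k} = I$, which is the first half of (ii).

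For $(i) \Rightarrow (ii)$, I would expand the one-way LOCC perfect distinguishability condition $\bra{\psi_i}(A_k \otimes B_{k,j})\ket{\psi_i} = 0$ for $i \neq j$. Using the identities above together with $A_k^T = m_k\kb{\bar{a}_k}{\bar{a}_k}$, this becomes $\tfrac{m_k}{d}\bra{\phi_k} M_i^* B_{k,j} M_i \ket{\phi_k} = 0$, i.e., $B_{k,j}^{1/2} M_i \ket{\phi_k} = 0$ for all $i \neq j$ by positivity. Summing the relation $\sum_j B_{k,j} = I_B$ over $j \neq i$ forces $M_i \ket{\phi_k}$ to lie in the $1$-eigenspace of $B_{k,i}$ and in $\ker B_{k,j}$ for $j \neq i$; hence the vectors $\{M_i \ket{\phi_k}\}_i$ are mutually orthogonal (whenever nonzero), and (\ref{maineqn}) follows. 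The reverse implication $(ii) \Rightarrow (i)$ uses the same computation in reverse: define $A_k := m_k \kb{\bar{\phi}_k}{\bar{\phi}_k}$, and for each $k$ take $B_{k,j}$ to be the orthogonal projection onto the span of $M_j \ket{\phi_k}$ (pairwise orthogonal by (\ref{maineqn})), absorbing the projection onto the orthogonal complement of the total span into some fixed outcome so that $\sum_j B_{k,j} = I_B$.

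The equivalence $(ii) \Leftrightarrow (iii)$ is then essentially a rewriting. Form the $d \times r$ matrix $W$ whose $k$th column is $\sqrt{m_k}\ket{\phi_k}$. Then $WW^* = \sum_k m_k \kb{\phi_k}{\phi_k}$, which equals $I_d$ exactly when the resolution of identity in (ii) holds (and this in turn is equivalent to $W$ being a partial isometry with $WW^* = I_d$). The $(k,k)$ entry of $W^* M_j^* M_i W$ is $m_k \bra{\phi_k} M_j^* M_i \ket{\phi_k}$, so since $m_k > 0$, the diagonal-vanishing condition of (iii) matches (\ref{maineqn}) exactly.

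The main subtlety I expect is the careful bookkeeping of the complex conjugate in the identification $\ket{\phi_k} = \ket{\bar{a}_k}$ (arising from the transpose in the $\ket{\Phi}$ identity), and ensuring the orthogonality characterization of Bob's post-measurement distinguishability handles the degenerate cases where some $M_i \ket{\phi_k}$ vanish. The latter is mild: vanishing vectors contribute trivially to (\ref{maineqn}) and do not require a corresponding nonzero $B_{k,i}$, so the construction in $(ii) \Rightarrow (i)$ goes through by restricting to the nonzero members of $\{M_i \ket{\phi_k}\}_i$ for each $k$.
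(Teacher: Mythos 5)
Your proposal is correct and follows essentially the same route as the paper: reduce Alice's POVM to rank-one elements $m_k\kb{a_k}{a_k}$, use the transpose/vec identity for $\ket{\Phi}$ to identify $\ket{\phi_k}=\ket{\bar a_k}$, reduce condition $(i)$ to orthogonality of Bob's conditional states $M_i\ket{\phi_k}$, and package the $\ket{\phi_k}$ into the partial isometry $W$ for $(ii)\Leftrightarrow(iii)$. The only difference is that you spell out the sub-arguments the paper leaves implicit (deriving orthogonality from $\sum_j B_{k,j}=I_B$, the explicit construction of Bob's projective measurement for $(ii)\Rightarrow(i)$, and the degenerate case $M_i\ket{\phi_k}=0$), which is fine.
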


\begin{proof}
First we recall the operator $\vecop(A)$, which takes the columns of a matrix $A$ and stacks them on top of each other to form a vector. The relation $(C\otimes A) \vecop(B) = \vecop(ABC^T)$ holds whenever the matrix product on the right makes sense. Observe that $\ket{\Phi} = \sqrt{d}^{-1}\vecop(I)$ and each $\ket{\psi_i} = (I\otimes M_i) \ket{\Phi} = \sqrt{d}^{-1} \vecop(M_i)$.

Now suppose condition $(i)$ holds. By the preceding discussion, we may assume Alice and Bob's measurements are defined by rank one operators, and hence there are states $\ket{a_k}$, $1 \leq k \leq r$, with $\tr(\kb{a_k}{a_k})=1$, such that Alice's measurement operators on the composite system are given by the set $\{A_k\otimes I\}$ with each $A_k = m_k \kb{a_k}{a_k}$. Alice's measurement must preserve the orthogonality of the unknown states, and her  measurement outcome $k$ corresponds to the state $(A_k\otimes I)\ket{\psi_i}$ being obtained, which can be rewritten using the identification
\[
\sqrt{d} (\kb{a_k}{a_k}\otimes I)\ket{\psi_i} = (\kb{a_k}{a_k}\otimes I)\vecop(M_i) = \vecop(M_i  \kb{\overline{a_k}}{\overline{a_k}} ),
\]
where we have used the notation $\ket{\overline{a_k}}$ to denote the state with coordinates given by the complex conjugates of the coordinates of $\ket{a_k}$. We can then use the relation $\vecop(\kb{a}{b}) = \tr_A( \vecop(\kb{a}{b}) \vecop(\kb{a}{b})^*)$ for all states $\ket{a},\ket{b}$ on $\C^d$, to obtain the corresponding state on Bob's system as,
\begin{eqnarray*}
\tr_A ( (A_k \otimes I ) \kb{\psi_i}{\psi_i}(A_k^*\otimes I) &=& 
 d^{-1} M_i\kb{\overline{a_k}}{\overline{a_k}} M_i^*.
\end{eqnarray*}
These states can be perfectly distinguished by Bob precisely when they are orthogonal; in other words, the algebraic relations of Eq.~(\ref{maineqn}) are satisfied with $\ket{\phi_k} = \ket{\overline{a_k}}$ for all $k$. This establishes condition~$(ii)$, and one can verify that this argument is reversible, thus establishing that $(ii)$ implies $(i)$ as well.

When condition $(ii)$ holds, we can define a partial isometry $W^*: \C^d \rightarrow \C^r$ as the sum of outer products
\[
W = \sum_{k=1}^r \sqrt{m_k} \ket{\phi_k} \bra{k-1},
\]
and one can verify $WW^* = I_d$. Moreover, whenever $i \ne j$, the $k$th diagonal entry satisfies:
\[
\bra{k} W^{*}M_{j}^{*}M_{i}W \ket{k} = m_k \bra{\phi_{k}} M_{j}^{*} M_{i} \ket{\phi_k} = 0.
\]
On the other hand, given a matrix representation of a partial isometry $W: \C^r \rightarrow \C^d$ with diagonal entries satisfying condition $(iii)$, we can use the corresponding outer product representation to define the states $\{ \ket{\phi_k} \}$ that satisfy $(ii)$.
\end{proof}

\section{Perfect Distinguishability under One-Way LOCC vs Arbitrary Operations}\label{S:perfect}

For a general class of states perfect distinguishability by one-way LOCC is a much stronger condition than perfect distinguishability using arbitrary operations.  A collection of pure states is distinguishable under arbitrary operations if and only if the states form an orthonormal set. Specifically then, it follows that a collection of states $\{ (I\otimes M_i)\ket{\Phi}\}_{i}$ is distinguishable under arbitrary operations if and only if $\tr(M_j^*M_i)=0$ whenever $i\neq j$.

In this section, we show that for certain classes of states and their associated defining operators, which jointly arise in operator and matrix theoretic as well as quantum information theoretic settings, perfect distinguishablity under one-way LOCC is equivalent to perfect distinguishabilty under arbitrary operations.

In what follows, we let $\Delta$ be the map that zeros out all the off-diagonal entries of a square matrix of a given size but leaves its diagonal entries unchanged; in other words, as a map on operators, there is a basis $\{ \ket{k} \}$ such that $\Delta(\rho) = \sum_k \kb{k}{k} \rho \kb{k}{k}$ is the von Neumann measurement map defined by the basis.  We also recall that a $d\times d$ {\it Latin square} is a matrix with entries from $\{ 1,2,...,d\}$ with the property that no two entries in any fixed row or fixed column are the same.

\begin{prop} Let $\{ P_k\}_{k=1}^n$ be a set of $d\times d$ permutation matrices and let $\mathcal{S}= \{(I\otimes P_i)\ket{\Phi}\}_{i}$.  Then the following statements are equivalent:
\begin{enumerate}
\item The states in $\mathcal{S}$ are perfectly distinguishable by one-way LOCC.

\item The states in $\mathcal{S}$ are perfectly distinguishable by arbitrary operations.

\item $\Delta(P_j^*P_i)=0$ whenever $i\neq j$.
\end{enumerate}
Moreover, if these conditions hold and $n = d$, then there exists a $d\times d$ Latin square $\mathcal{L}$ with the property that $\forall i,j,k \in \{ 1,2,...,d\} $ the $(i,j)$ entry of $P_k$ is one if and only if the $(i,j)$ entry of $\mathcal{L}$ is $k$.
\end{prop}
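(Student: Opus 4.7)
The plan is to establish the equivalence of the three numbered conditions first, and then construct the Latin square in the $n=d$ case.

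For the equivalences, the implication $(1)\Rightarrow(2)$ is immediate since arbitrary operations include one-way LOCC. For $(2)\Leftrightarrow(3)$, I would recall that $\{(I\otimes M_i)\ket{\Phi}\}$ is an orthogonal set if and only if $\tr(M_j^*M_i)=0$ for $i\neq j$; when the $M_i=P_i$ are permutation matrices, $P_j^*P_i$ is itself a permutation matrix, whose trace equals the number of its diagonal ones and therefore vanishes precisely when $\Delta(P_j^*P_i)=0$. For $(3)\Rightarrow(1)$, I would invoke Proposition~\ref{prop1}(ii) with the canonical choice $\ket{\phi_k}=\ket{k-1}$ and $m_k=1$ for $k=1,\ldots,d$; then $\sum_k \kb{\phi_k}{\phi_k}=I$, and each scalar $\bra{\phi_k}P_j^*P_i\ket{\phi_k}$ is simply the $k$th diagonal entry of $P_j^*P_i$, which vanishes by $(3)$.

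For the Latin square claim, the key step is to translate condition $(3)$ into combinatorial terms. A direct computation gives $(P_j^*P_i)_{k,k}=\sum_m (P_j)_{m,k}(P_i)_{m,k}$, which equals $1$ exactly when $P_i$ and $P_j$ have their single $1$ in column $k$ at the same row; letting $k$ range, condition $(3)$ amounts to the assertion that no two distinct $P_k$ ever carry a $1$ at the same matrix position. When $n=d$, the total number of $1$'s across all $P_k$ is $d\cdot d=d^2$, which matches the number of matrix entries; combined with disjointness, this forces exactly one $P_k$ to have a $1$ at each $(i,j)$. Defining $\mathcal{L}(i,j)=k$ when $(P_k)_{i,j}=1$ then yields the claimed bijective correspondence.

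Finally, I would verify that $\mathcal{L}$ is a Latin square. In any fixed row $i$, each $P_k$ contributes exactly one $1$ at some column $c_k$; by disjointness these columns $c_k$ must be distinct as $k$ varies, so $k\mapsto c_k$ is a permutation of $\{1,\ldots,d\}$ and each value $k$ appears exactly once along row $i$ of $\mathcal{L}$. The column case is symmetric. The main obstacle is essentially bookkeeping: the substantive observation is that for permutation matrices trace and $\Delta$ encode the same vanishing information, after which the Latin square structure emerges cleanly from counting.
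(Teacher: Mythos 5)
Your proof is correct and follows essentially the same route as the paper: the trace/$\Delta$ equivalence for permutation matrices, the choice $W=I$ (equivalently $\ket{\phi_k}=\ket{k-1}$) for $(3)\Rightarrow(1)$, and the disjoint-support observation for the Latin square. Your counting argument for why exactly one $P_k$ has a $1$ in each position, and the row/column verification, merely spell out what the paper dismisses as ``evidently a Latin square.''
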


\begin{proof}
Evidently $(1)\implies(2)$ by virtue of the type of operations used, but we can also see this through the equations by choosing a $W$ from Proposition~\ref{prop1} and observing $\tr(P_j^*P_i) = \tr(P_j^*P_i WW^*) = \tr(W^* P_j^*P_i W) =0$ for all $i\neq j$.

To see $(2) \implies (3)$, note that if $\tr(P_j^*P_i)=0$, then $\Delta(P_j^*P_i)=0$ as well, since  $P_j^*P_i$ is a permutation matrix.

For $(3) \implies (1)$, simply take $r=d$ and $W$ to be the identity to see the conditions of Proposition~\ref{prop1} are satisfied.

Finally, suppose these equivalent conditions hold and that $n = d$. If both $P_i$ and $P_j$ had a one in their $(k,l)$ entry, then $P_j^*P_i$ would have its $(l,l)$ entry strictly positive, which would contradict $\tr(P_j^*P_i)=0$.  Thus we can let $\mathcal{L}$ be a $d \times d$ matrix whose  $(i,j)$ entry is $k$ if $P_k$ is the unique permutation matrix whose $(i,j)$ entry is one:  $\mathcal{L} = \sum_{k = 1}^d kP_k$, which is evidently a Latin square.
\end{proof}

We introduce the following concept first considered in \cite{hiroshima2004finding} in the context of searches for maximally correlated states, and which we now make use of in our current setting.

\begin{defn}
We say that a set of states $\{(I\otimes M_k)\ket{\psi_i}\}_{k=1}^n$ have a simultaneous Schmidt decomposition if there exists two unitary matrices $U$ and $V$ and $n$ complex diagonal matrices $D_k$ such that  for each $k$, $M_k=UD_kV$.
\end{defn}

These decompositions are not strictly speaking Schmidt decompositions and are instead called ``weak Schmidt decompositions'' in \cite{hua2014schmidt} because we are not imposing any requirement that the entries of the diagonal matrices $D_k$ be nonnegative. It was noted in  \cite{hiroshima2004finding} that for generalized Bell states, possessing a simultaneous Schmidt decomposition is a sufficient condition for distinguishability by LOCC.  We considerably strengthen and generalize this  result by showing that for any set of states that have a simultaneous Schmidt decomposition, distinguishability with arbitrary operations always implies distinguishability by one-way LOCC.

\begin{prop}\label{second} Let $\mathcal{S} = \{(I\otimes M_k)\ket{\psi_i}\}_{k=1}^n \in \mathbb{C}^d\otimes \mathbb{C}^d$ have a simultaneous Schmidt decomposition.  Then the following statements are equivalent:

\begin{enumerate}

\item The states in $\mathcal{S}$ are perfectly distinguishable by one-way LOCC.

\item The states in $\mathcal{S}$ are perfectly distinguishable by arbitrary operations.

\item There exists a $d \times d$  unitary matrix $A$ such that $\Delta(A^*M_j^*M_iA)=0$ whenever $i\neq j$.

\end{enumerate}
\end{prop}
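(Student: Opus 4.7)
The plan is to establish the cycle $(3)\implies(1)\implies(2)\implies(3)$. The first two links are essentially immediate: for $(3)\implies(1)$, take $r=d$ and $W=A$ in condition~$(iii)$ of Proposition~\ref{prop1}, since a $d\times d$ unitary is a partial isometry satisfying $WW^*=I_d$. For $(1)\implies(2)$, one-way LOCC is a subclass of arbitrary quantum operations; alternatively, since the states $(I\otimes M_i)\ket{\Phi}$ are orthogonal if and only if $\tr(M_j^*M_i)=0$, this also follows via the identity $\tr(M_j^*M_i)=\tr(W^*M_j^*M_iW)=0$ applied to the $W$ from Proposition~\ref{prop1}$(iii)$.

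The substantive step is $(2)\implies(3)$, where both the simultaneous Schmidt hypothesis and a careful choice of $A$ come into play. Writing $M_k=UD_kV$ with $U,V$ unitary and $D_k$ diagonal, the common $U$ cancels:
\[
M_j^*M_i \;=\; V^*D_j^*U^*UD_iV \;=\; V^*D_j^*D_iV.
\]
Consequently $A^*M_j^*M_iA=(VA)^*D_j^*D_i(VA)$, and since $VA$ is unitary precisely when $A$ is, it suffices to exhibit a unitary $B$ such that every diagonal entry of $B^*D_j^*D_iB$ vanishes for all $i\neq j$; then setting $A:=V^*B$ completes the argument. Writing $E_{ij}:=D_j^*D_i$, hypothesis~$(2)$ gives $\tr(E_{ij})=\tr(M_j^*M_i)=0$ for $i\neq j$, and the $\ell$-th diagonal entry of $B^*E_{ij}B$ equals $\sum_m |B_{m\ell}|^2 (E_{ij})_{mm}$, a weighted average of the diagonal entries of $E_{ij}$.

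To make this average vanish for every $\ell$ simultaneously, choose $B$ to be a complex Hadamard matrix, for instance the $d\times d$ discrete Fourier transform, whose entries all satisfy $|B_{m\ell}|^2=1/d$. Then each diagonal entry of $B^*E_{ij}B$ becomes $(1/d)\tr(E_{ij})=0$, as required. The one delicate observation is that a \emph{uniform-magnitude} unitary is needed to spread the trace-zero condition evenly onto the diagonal; once this is recognized, the existence of the Fourier matrix in every dimension removes any obstruction, and I expect the proof to close cleanly with no further difficulties.
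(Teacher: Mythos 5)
Your proof is correct and follows essentially the same route as the paper: both reduce $(2)\implies(3)$ to the diagonal matrices $D_j^*D_i$ via the common unitary $V$ and then take $A=V^*F$ with $F$ the Fourier matrix. The only cosmetic difference is that the paper justifies the final step by noting $F^*(VM_j^*M_iV^*)F$ is a trace-zero circulant matrix (hence has constant, therefore zero, diagonal), whereas you compute the diagonal entries directly and observe that any complex Hadamard matrix works --- the same idea in slightly more general dress.
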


\begin{proof} The implication $(1)\implies (2)$ is straightforward as above, and $(3)\implies (1)$ follows from Proposition~\ref{prop1}. So suppose $(2)$ holds, that is $\tr(M_j^*M_i)=0$ whenever $i\neq j$.  Let $U$ and $V$ be unitary matrices and $D_k$ be diagonal matrices such that $M_k=UD_kV$ for all $k$.  Then $VM_j^*M_iV^*$ is diagonal for all $i,j$. Let $F$ be the $d \times d$ Fourier matrix.  Then $F^*VM_j^*M_iV^*F$ is a trace zero circulant matrix when $i\neq j$. (More on Fourier and circulant matrices can be found in the standard reference 
\cite{davis2012circulant}). Hence condition $(3)$ is satisfied for a choice of $A=V^*F$.  (In fact, we can let $A=V^*FD$ for any diagonal unitary matrix $D$, so the solution is far from unique.)
\end{proof}

Consider the following physically motivated class of examples to which this result applies.

\begin{defn} Let $\mathcal{H}=\mathbb{C}^d$ be the qudit Hilbert space  with orthonormal basis $\{ \vert k\rangle \}_{k=0}^{d-1}$.  Let $\omega= e^\frac{2\pi i}{d}$.  We define two unitary operators $X$ and $Z$ on $\mathcal{H}$ as follows: $X\vert k\rangle =\vert k+1 (mod\ d) \rangle$ and $Z\vert k \rangle= \omega \vert k\rangle$.  Then the Generalized Pauli operators are the set $\{ X^aZ^b \}_{a,b=0}^{d-1}$. \end{defn}

Both of the sets $\{X^k\}_{k=0}^{d-1}$ and $\{Z^k\}_{k=0}^{d-1}$ form Abelian groups. It is easy to see that they satisfy the hypothesis and the second equivalent condition of Proposition \ref{second} (with these operators playing the role of the operators $M_i$); hence they satisfy all equivalent conditions of the proposition.  A more general result along these lines can be found in \cite{hiroshima2004finding}.

We can prove one more result of this type, giving an alternate proof of a known result. We first need the following result of Fillmore \cite{f69}.

\begin{lem}  Let $M$ be any trace zero matrix, then there exists a unitary matrix $V$ such that $\Delta(V^*MV)=0$.\end{lem}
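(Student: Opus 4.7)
The plan is to prove the classical fact by induction on the dimension $d$, with the key ingredient being the Toeplitz–Hausdorff convexity theorem for the numerical range. The base case $d=1$ is trivial since a $1\times 1$ trace-zero matrix is already zero.

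For the inductive step, I would first establish the following claim: if $M$ is any $d\times d$ trace-zero matrix, then there exists a unit vector $v\in\C^d$ with $\la v\mid Mv\ra=0$. To see this, recall that the numerical range $W(M)=\{\la v\mid Mv\ra : \|v\|=1\}$ is convex by the Toeplitz–Hausdorff theorem, and contains every eigenvalue of $M$ (by choosing $v$ to be an eigenvector). Since $\Tr(M)=0$, the average of the eigenvalues (counted with multiplicity) is zero, so $0$ lies in the convex hull of the eigenvalues and hence in $W(M)$.

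Next, choose such a unit vector $v_1$ and extend it to an orthonormal basis $v_1,v_2,\ldots,v_d$ of $\C^d$. Let $U$ be the unitary whose columns are these vectors. Then $U^*MU$ has $(1,1)$ entry equal to $\la v_1\mid Mv_1\ra=0$, and its lower-right $(d-1)\times(d-1)$ block $M'$ has trace equal to $\Tr(M)-0=0$. By the inductive hypothesis applied to $M'$, there is a $(d-1)\times(d-1)$ unitary $W$ with $\Delta(W^*M'W)=0$. Setting $V=U(1\oplus W)$, one checks directly that the $(1,1)$ entry of $V^*MV$ is still $0$ and its lower-right $(d-1)\times(d-1)$ block is $W^*M'W$, which has zero diagonal. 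Hence $\Delta(V^*MV)=0$, completing the induction.

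The only real obstacle is invoking the convexity of the numerical range, which is a standard non-trivial fact; once that is in hand, the rest is a straightforward induction via basis extension. No subtle calculation is needed beyond verifying the block structure of $V^*MV$.
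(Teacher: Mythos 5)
Your proof is correct. Note that the paper itself gives no proof of this lemma --- it is stated as a known result and attributed to Fillmore's 1969 paper --- so there is no argument in the text to compare against. The route you take (locate $0$ in the numerical range via Toeplitz--Hausdorff convexity applied to the eigenvalues, then induct on dimension by splitting off the $(1,1)$ entry and conjugating the lower-right block) is the standard proof of this fact, and all the steps check out: the trace of the compressed block $M'$ is indeed $\Tr(M)-0=0$, and conjugation by $1\oplus W$ leaves the $(1,1)$ entry untouched while replacing the lower-right block by $W^*M'W$, so the diagonal of $V^*MV$ vanishes.
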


We can use this lemma to prove our result for pairs of states arising from two matrices (which need not be unitary). For clarity we simply refer to the operators that define the corresponding states in matrix form.

\begin{prop}\label{third} Let $M_1$ and $M_2$ be two $d\times d$ complex matrices.  Then the following statements are equivalent:

\begin{enumerate}

\item There exists a $d \times d$  unitary matrix $V$ such that $\Delta(V^*M_2^*M_1V)=0$.

\item There exists an integer $r\ge d$, and a co-isometry $W$ from $\mathbb{C}^r$ to $\mathbb{C}^d$ such that $\Delta(W^*M_2^*M_1W)=0$.

\item $\tr(M_2^*M_1)=0$.

\end{enumerate}

\end{prop}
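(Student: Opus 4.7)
The plan is to prove the equivalence via the cycle $(1) \Rightarrow (2) \Rightarrow (3) \Rightarrow (1)$, leveraging Fillmore's lemma for the closing implication and simple trace-cyclicity for the others.

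For $(1) \Rightarrow (2)$, I would simply take $r = d$ and $W = V$. A square unitary is in particular a co-isometry from $\C^d$ to $\C^d$ (since $VV^* = I_d$), and the hypothesis $\Delta(V^*M_2^*M_1V) = 0$ becomes exactly the conclusion $\Delta(W^*M_2^*M_1W) = 0$. So this step is essentially definitional.

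For $(2) \Rightarrow (3)$, I would use the co-isometry condition $WW^* = I_d$ together with the trace property. Writing
\[
\tr(M_2^*M_1) \;=\; \tr(M_2^*M_1 WW^*) \;=\; \tr(W^*M_2^*M_1W),
\]
we see that the trace of $M_2^*M_1$ equals the trace of $W^*M_2^*M_1W$, which is the sum of the diagonal entries of the latter matrix. Since those diagonal entries all vanish by hypothesis, we get $\tr(M_2^*M_1) = 0$.

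For $(3) \Rightarrow (1)$, the key observation is that the matrix $M := M_2^*M_1$ has trace zero, so Fillmore's lemma (the preceding lemma in the paper) applies directly and produces a $d \times d$ unitary $V$ such that $\Delta(V^*MV) = \Delta(V^*M_2^*M_1V) = 0$. This is exactly condition $(1)$.

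The only real substance in the argument is the invocation of Fillmore's lemma for the step $(3) \Rightarrow (1)$, which is why the authors stated it as a separate lemma; the other two implications are purely formal manipulations with the co-isometry relation and cyclicity of the trace, so I do not anticipate any genuine obstacle in assembling the proof.
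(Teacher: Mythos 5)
Your proof is correct and follows exactly the route the paper intends: the authors state Fillmore's lemma immediately before this proposition precisely to supply the $(3)\Rightarrow(1)$ step, and the remaining implications are the same formal manipulations (taking $W=V$ with $r=d$, and using $WW^*=I_d$ with cyclicity of the trace) that the paper leaves to the reader. No gaps.
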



Thus we have recovered the remarkable result of Walgate, et al.\cite{Walgate-2000}, that {\it any} two orthogonal pure states can be distinguished by one way LOCC. We note that (3) no longer implies either (1) or (2) when the number of  general unitary matrices increases.  The example from \cite{n13} shows that this fails for a specific set of three generalized permutation matrices.

\section{Operator Algebras, Operator Systems, and Separating Vectors in the LOCC Context}\label{S:operatorseparating}

Here we exhibit a connection between perfect distinguishability with one-way LOCC and separating vectors of operator algebras and systems. Before continuing, let us recall briefly the basic structure theory for finite-dimensional $C^*$-algebras; namely, that every such algebra $\mathfrak{A}$ is $\ast$-isomorphic to the orthogonal direct sum of complex full matrix algebras $M_n$, and from the representation theory for such algebras that $\mathfrak{A}$ is unitarily equivalent to an orthogonal direct sum of the form $\bigoplus_{i}(I_{k_i}\otimes M_{n_i})$ (the indices $k_i$ correspond to multiplicities of $n_i$-dimensional irreducible representations that determine the structure of $\mathfrak{A}$). The algebra $\mathfrak{A}$ is unital if it contains the identity operator.

\begin{defn} Let $\mathfrak{A}$ be a unital $C^*$-algebra.  Any linear subspace of $\mathfrak{S}$ which contains the identity and is closed under taking adjoints is called an operator system. \end{defn}

\begin{defn} Let $\mathcal{H}$ be a Hilbert space and let $\mathfrak{S}\subseteq B(\mathcal{H})$ be a set of operators on $\mathcal{H}$ that form an operator system.  A vector $\ket{\psi}\in \mathcal{H}$ is said to be a separating vector of $\mathfrak{S}$ if $A\ket{\psi}\neq 0$ whenever $A$ is a nonzero element of $\mathfrak{S}$.    \end{defn}

If $\mathcal{H}$ is finite-dimensional and $\mathfrak{S}$ is in fact a $C^*$-subalgebra, then we may use the aforementioned representation theory for such algebras to determine the existence of a separating vector as follows. (See \cite[Chapter 3]{pereira2} for more in-depth investigations on this topic.)

\begin{prop} \label{Cex} The $C^*$-algebra $\bigoplus_{i} (I_{k_i}\otimes M_{n_i})$ has a separating vector if and only if $k_i \ge n_i$ for all $i$.\end{prop}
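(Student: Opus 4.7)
The plan is to reduce the statement to the case of a single summand, then handle the single-summand case with a Schmidt decomposition argument.

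First I would decompose a vector in the representation Hilbert space $\mathcal{H} = \bigoplus_i (\mathbb{C}^{k_i} \otimes \mathbb{C}^{n_i})$ as $\ket{\psi} = \bigoplus_i \ket{\psi_i}$, and note that a generic algebra element $A = \bigoplus_i (I_{k_i} \otimes A_i)$ acts blockwise as $A\ket{\psi} = \bigoplus_i (I_{k_i} \otimes A_i)\ket{\psi_i}$. Since the summands are mutually orthogonal and the $A_i$ can be chosen independently, $\ket{\psi}$ is separating for the whole algebra if and only if each $\ket{\psi_i}$ is separating for $I_{k_i} \otimes M_{n_i}$ acting on $\mathbb{C}^{k_i} \otimes \mathbb{C}^{n_i}$. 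This reduces the proposition to the statement that $I_k \otimes M_n$ on $\mathbb{C}^k \otimes \mathbb{C}^n$ admits a separating vector precisely when $k \geq n$.

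For the reduced problem, I would invoke the Schmidt decomposition: write $\ket{\psi} = \sum_{l=1}^{r} \sqrt{\lambda_l}\,\ket{e_l}\otimes\ket{f_l}$, where $\{\ket{e_l}\}$ and $\{\ket{f_l}\}$ are orthonormal in $\mathbb{C}^k$ and $\mathbb{C}^n$ respectively, $\lambda_l > 0$, and $r \leq \min(k,n)$ is the Schmidt rank. If $r < n$, then I can pick a unit vector $\ket{g} \in \mathbb{C}^n$ orthogonal to every $\ket{f_l}$ and set $A = \kb{g}{g}$; then $(I_k \otimes A)\ket{\psi} = 0$ with $A \neq 0$, so $\ket{\psi}$ fails to be separating. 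If instead $r = n$, then $\{\ket{f_l}\}_{l=1}^n$ is a basis of $\mathbb{C}^n$, and from $(I_k \otimes A)\ket{\psi} = \sum_l \sqrt{\lambda_l}\,\ket{e_l}\otimes A\ket{f_l} = 0$ the linear independence of the $\ket{e_l}$ plus positivity of the $\lambda_l$ force each $A\ket{f_l}=0$, hence $A=0$. So the single-summand problem has a separating vector iff a Schmidt rank $n$ vector exists in $\mathbb{C}^k \otimes \mathbb{C}^n$, which happens exactly when $k \geq n$; for the existence direction, the explicit vector $\sum_{l=0}^{n-1}\ket{l}\otimes\ket{l}$ works under any isometric embedding $\mathbb{C}^n \hookrightarrow \mathbb{C}^k$.

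Assembling the two pieces gives the proposition: a separating vector exists for $\bigoplus_i(I_{k_i}\otimes M_{n_i})$ iff each block admits a separating vector iff $k_i \geq n_i$ for every $i$. I do not anticipate a real obstacle here; the whole argument is essentially elementary linear algebra. The only point requiring care is to confirm that the Schmidt-rank characterization is sharp in both directions, and the construction $\ket{g}\perp\mathrm{span}\{\ket{f_l}\}$ in the ``only if'' direction is what makes this sharpness explicit.
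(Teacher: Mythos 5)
Your proof is correct and follows essentially the same route as the paper: the Schmidt-rank-$n$ condition is exactly the paper's condition that the components $\ket{\psi_1},\ldots,\ket{\psi_k}$ of the stacked vector span $\mathbb{C}^n$, and both arguments produce a nonzero (rank-one) projection annihilating the vector when that span is proper. The blockwise reduction to a single summand is likewise the "similar analysis" the paper leaves implicit, so there is nothing to flag.
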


\begin{proof}
This can be seen most directly by first considering the special case $I_k \otimes M_n$. Suppose $k \geq n$ and hence we may choose a set of vectors $\ket{\psi_1},\ldots,\ket{\psi_k}$ that span the space $\mathbb{C}^n$. Then let $\ket{\psi} = (\ket{\psi_1},\ldots,\ket{\psi_k})^T$ and observe that $A\in M_n$ with $(I_k\otimes A)\ket{\psi}=0$ implies $A\ket{\psi_i}=0$ for all $i$, so that $A=0$ and hence $\ket{\psi}$ is a separating vector for $I_k \otimes M_n$. On the other hand, if $k < n$, then for any vector $\ket{\psi} = (\ket{\psi_1},\ldots,\ket{\psi_k})^T$ with $\ket{\psi_i}\in\mathbb{C}^n$, we can let $P$ be the (non-zero) projection onto the orthogonal complement of the subspace $\spn\{\ket{\psi_i}\}$ inside $\mathbb{C}^n$. It follows that $(I_k\otimes P)\ket{\psi} = 0$ but $P\neq 0$, and hence $I_k\otimes M_n$ has no separating vector. For the general case, a similar analysis shows that a separating vector exists if and only if $k_i \geq n_i$ for all $i$.
\end{proof}


\begin{rmk}\label{rmk: Dimension}
Suppose that $\ket{\psi}$ is a separating vector of $\mathfrak{S}$ and that $\mathfrak{S}$ is an algebra. Then a simple dimension bound may be obtained as a consequence of this result on the size of $\mathfrak{S}$ by observing that $A \to A\ket{\psi}$ is an injective linear map from $\mathfrak{S}$ to $\mathcal{H}$, and hence $dim(\mathfrak{S})\le dim(\mathcal{H})$. In particular this applies to operator systems defined by families of one-way LOCC unitaries, as noted in the corollary below.
\end{rmk}

In the context of one-way LOCC, we are interested in operator systems that arise naturally through the equations of Proposition~\ref{prop1}. We note that from the discussion above, any $C^*$-subalgebra of such an operator system must have a separating vector which limits its dimension.

\begin{thm}  \label{thm: Delta} Let $W : \mathbb{C}^r \rightarrow \mathbb{C}^d$ be an operator and let $\Delta$ be the diagonal map on $\mathbb{C}^r$ in a fixed basis as defined above. Consider the operator system $\mathfrak{S}$ on $\mathbb{C}^d$ defined as the set of all operators $X$ which satisfy:
\begin{equation}\label{sepveceqn}
\Delta(W^*XW)=\frac{\tr(X)}{d}\Delta(W^*W).
\end{equation}
If $\mathfrak{A}$ is a $C^*$-subalgebra of $\mathfrak{S}$, then $\mathfrak{A}$ has a separating vector.
\end{thm}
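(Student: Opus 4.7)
The plan is to exhibit a separating vector for $\mathfrak{A}$ in the explicit form $W\ket{k}$, where $\ket{k}$ is one of the basis vectors used to define the diagonal map $\Delta$. Fix the basis $\{\ket{k}\}_{k=1}^{r}$ of $\mathbb{C}^{r}$ in which $\Delta$ zeros out off-diagonal entries, and set $\ket{\psi_k}:=W\ket{k}$. The first step is simply to compare the $(k,k)$ diagonal entries on both sides of~(\ref{sepveceqn}), which immediately yields
\[
\bra{\psi_k} X \ket{\psi_k} \;=\; \frac{\tr(X)}{d}\,\|\ket{\psi_k}\|^{2} \qquad \text{for every } X\in\mathfrak{S} \text{ and every } k.
\]
The strength of this identity is that, on all of $\mathfrak{S}$, the vector expectation $X\mapsto\bra{\psi_k}X\ket{\psi_k}$ is exactly a rescaled trace; this is the entire content of~(\ref{sepveceqn}) extracted one diagonal entry at a time.

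The second step is to pick any $k$ with $\ket{\psi_k}\neq 0$ (such a $k$ exists as soon as $W\neq 0$) and argue that $\ket{\psi}:=\ket{\psi_k}$ separates $\mathfrak{A}$. Because $\mathfrak{A}$ is a $*$-subalgebra, for any $A\in\mathfrak{A}$ the positive operator $A^{*}A$ lies in $\mathfrak{A}\subseteq\mathfrak{S}$, so applying the identity above to $X=A^{*}A$ gives
\[
\|A\ket{\psi}\|^{2} \;=\; \bra{\psi}A^{*}A\ket{\psi} \;=\; \frac{\tr(A^{*}A)}{d}\,\|\ket{\psi}\|^{2}.
\]
Since $\|\ket{\psi}\|>0$ and the Hilbert--Schmidt form $A\mapsto\tr(A^{*}A)$ is faithful, the equation $A\ket{\psi}=0$ forces $A=0$, which is precisely the separating condition.

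There is essentially no substantive obstacle in this plan: the heart of the matter is the rigid consequence of~(\ref{sepveceqn}) recorded in the first display, after which positivity of $A^{*}A$ and faithfulness of the trace do the rest. The one point worth flagging is the degenerate case $W=0$, where $\mathfrak{S}=B(\mathbb{C}^{d})$ and the statement can fail for $d>1$; this case is implicitly excluded by the LOCC context of Proposition~\ref{prop1}, in which $W$ is a co-isometry with $WW^{*}=I_d$ and hence nonzero. It is also worth noting that the argument never uses the full algebra structure of $\mathfrak{A}$ beyond closure under $A\mapsto A^{*}A$, which foreshadows that the same vector $\ket{\psi}$ will play a natural role in later sections.
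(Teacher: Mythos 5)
Your proposal is correct and follows essentially the same route as the paper's proof: both take the separating vector to be a column $W\ket{k}$ of $W$ with $\bra{k}W^*W\ket{k}>0$ and apply the defining relation to $X=A^*A$ to get $\|A\ket{\psi}\|^2 = \frac{\tr(A^*A)}{d}\,\|\ket{\psi}\|^2$, concluding from faithfulness of the trace. Your remark that the degenerate case $W=0$ must be excluded is a fair observation the paper leaves implicit, but otherwise the arguments coincide.
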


\begin{proof}
There exists a $k: 1\le k\le r$ such that the $(k,k)$ entry of $\Delta(W^*W)$ is a strictly positive real number, call this number $c$.  Let $\ket{\psi}$ be the $k$th column of $W$, then $\Vert A\ket{\psi}\Vert^2= \bra{k}W^*A^*AW\ket{k} = \frac{c}{d}\tr(A^*A)\neq 0$ when $A$ (and hence $A^*A$) is a nonzero element of $\mathfrak{A}$.  It follows that $\ket{\psi}$ is a separating vector for $\mathfrak{A}$.\end{proof}

In the context of LOCC, it is well-known that no more than $d$ maximally-entangled states in $\C^d\otimes\C^d$ can be distinguished with LOCC \cite{ghosh2004distinguishability, Nathanson-2005}. In the case of such a maximal set, we can state a corollary to the above result:
\begin{cor}
Let $\S = \{ \ket{\psi_i}= (I\otimes U_i)\ket{\Phi}  \}_{i = 1}^d \subseteq \C^d\otimes\C^d$ be a set of $d$ orthogonal maximally entangled states which are perfectly distinguishable with one-way LOCC, and let $\mathfrak{S}$ be the operator system spanned by the set $\{ U_j^* U_i \}$. Then $\mathfrak{S}$ has a separating vector if and only if $\dim \mathfrak{S} = d$, and in this case $\mathfrak{S}$ forms a C$^*$-algebra.
\end{cor}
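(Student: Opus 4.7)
The plan is to reduce the corollary to Proposition~\ref{prop1}, Theorem~\ref{thm: Delta}, and a dimension count driven by the Hilbert--Schmidt orthogonality of the $U_i$. First I would set up the containment of $\mathfrak{S}$ in the operator system described in Theorem~\ref{thm: Delta}. Fix a partial isometry $W$ with $WW^*=I_d$ furnished by Proposition~\ref{prop1}, so that $\Delta(W^*U_j^*U_iW)=0$ whenever $i\neq j$. Orthonormality of the maximally entangled states gives $\tr(U_j^*U_i)=d\,\delta_{ij}$; hence for every generator $X=U_j^*U_i$ one immediately checks $\Delta(W^*XW)=\tfrac{\tr(X)}{d}\Delta(W^*W)$, and by linearity this relation propagates to all $X\in\mathfrak{S}$. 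Therefore $\mathfrak{S}$ lies inside the operator system of Theorem~\ref{thm: Delta}, and any $C^*$-subalgebra of $\mathfrak{S}$ must carry a separating vector.

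Next I would pin down the dimension from both sides. The identity $\tr(U_j^*U_i)=d\,\delta_{ij}$ makes the $d$ elements $\{U_1^*U_i\}_{i=1}^d\subset\mathfrak{S}$ linearly independent, so $\dim\mathfrak{S}\geq d$ holds unconditionally. For the forward direction of the equivalence, if $\ket{\psi}$ is a separating vector then the evaluation map $A\mapsto A\ket{\psi}$ is an injective linear map from $\mathfrak{S}$ into $\C^d$ (the bound flagged in Remark~\ref{rmk: Dimension}, which uses only linearity and the separating property), so $\dim\mathfrak{S}\leq d$ and hence $\dim\mathfrak{S}=d$.

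For the converse I would assume $\dim\mathfrak{S}=d$ and show $\mathfrak{S}$ is closed under products, at which point the opening paragraph supplies the separating vector and the proof is complete. Set $V_i:=U_1^*U_i$, so that $V_1=I$, each $V_j^*V_i=U_j^*U_i$ is a generator of $\mathfrak{S}$, and the $d$ linearly independent vectors $\{V_i\}_{i=1}^d$ form a basis of $\mathfrak{S}$. Because $\mathfrak{S}$ is an operator system it is closed under taking adjoints, so $\{V_i^*\}$ is also a basis; expanding each $V_i=\sum_k c_{i,k}V_k^*$ and multiplying on the right by $V_l$ gives $V_iV_l=\sum_k c_{i,k}V_k^*V_l\in\mathfrak{S}$, and bilinearity extends product closure to all of $\mathfrak{S}$. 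I expect this algebra-closure step to be the main obstacle: a priori there is no reason a span of $\{U_j^*U_i\}$ should form an algebra, and the argument only goes through because we can exploit both the $V$-basis and the $*$-closed structure of the operator system to rewrite $V_iV_l$ as a combination of the generators $V_k^*V_l$.
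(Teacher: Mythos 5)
Your proof is correct, and its overall skeleton matches the paper's: separating vector $\Rightarrow \dim\mathfrak{S}\le d$ by injectivity of $A\mapsto A\ket{\psi}$; $\dim\mathfrak{S}=d$ $\Rightarrow$ $\mathfrak{S}$ is a C$^*$-algebra; and then the one-way LOCC relations place $\mathfrak{S}$ inside the operator system of Theorem~\ref{thm: Delta}, which hands back the separating vector. The one genuinely different step --- and it is the crux of the converse --- is how you close $\mathfrak{S}$ under multiplication. The paper normalizes (implicitly) so that $\mathfrak{S}=\operatorname{span}\{U_k\}$ and then invokes the Cayley--Hamilton theorem to write each unitary as $U_i=p_i(U_i^*)$, showing inductively that $(U_i^*)^mU_j\in\mathfrak{S}$ and hence $U_iU_j\in\mathfrak{S}$. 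You instead set $V_i=U_1^*U_i$ (so $V_1=I$ and $V_j^*V_i=U_j^*U_i$, making the normalization explicit rather than tacit), use the $*$-closure of the operator system to see that $\{V_i^*\}$ is a second basis, expand $V_i=\sum_k c_{i,k}V_k^*$, and conclude $V_iV_l=\sum_k c_{i,k}V_k^*V_l\in\mathfrak{S}$. Your route is more elementary (no Cayley--Hamilton, and unitarity is needed only for $U_1$ rather than for every $U_i$), and it cleanly justifies the identification $\mathfrak{S}=\operatorname{span}\{V_k\}$ that the paper states as $\mathfrak{S}=\operatorname{span}\{U_k\}$ without the normalization. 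The paper's Cayley--Hamilton argument buys a slightly different piece of structural information --- that inverses of the generators are polynomials in their adjoints --- but for this corollary both arguments deliver exactly what is needed. Your parenthetical observation that Remark~\ref{rmk: Dimension} needs only linearity and the separating property (not the algebra hypothesis stated there) is also correct and worth noting, since the forward implication is applied before $\mathfrak{S}$ is known to be an algebra.
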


\begin{proof} Since the $\ket{\psi_i}$ are mutually orthogonal, the matrices $\{ U_i \}$ are linearly independent and hence  $\dim \mathfrak{S} \ge d$. If $\dim \mathfrak{S} > d$, then $\mathfrak{S}$ does not have a separating vector by Remark \ref{rmk: Dimension}.

On the other hand, if we assume that $\dim \mathfrak{S} = d$, then $\mathfrak{S} = \mbox{span}{\{U_k\}}$ and $U_j^*U_i \in \mbox{span}{\{U_k\}}$ for all $i,j$. By the Cayley-Hamilton theorem, we can find complex polynomials $p_i(z)$ such that $U_i = p_i(U_i^*)$ for all $i$. Using these facts and the invertibility of each $U_i$, it follows that $U_i U_j = p_i(U_i^*)U_j \in \mathfrak{S}$ for each pair $i,j$, and hence that $\mathfrak{S}$ is a C$^*$-algebra.
%
Since the  $\ket{\psi_i}$ are distinguishable with one-way LOCC, there exists a $W$ such that (\ref{sepveceqn}) holds for all $X \in \{U_j^*U_i\}$ and thus for all $X \in \mathfrak{S}$. Hence $\mathfrak{S}$ has a separating vector by Theorem \ref{thm: Delta}.
\end{proof}


Note that there are cases in which a set of $d$ maximally entangled states can be distinguished with one-way LOCC but $\dim \mathfrak{S} > d$. For instance, using the generalized Pauli matrices, if we look at the set $\S = \{ U_i\}_{i = 1}^d$ with $U_i = X^i$ for $1 \le i \le d-1$ and  $U_d = Z$.  If $\ket{\phi}$ is a standard basis vector, then  $\bra{\phi} X^iZ\ket{\phi} = \bra{\phi} X^i\ket{\phi}  = 0$ if $i \ne 0$, implying that these states are LOCC-distinguishable. However, if $d>2$ then $\mathfrak{S} = \mbox{span}{\{U_j^*U_i\}}$ has dimension $3d-2$, implying that it does not have a separating vector.

We can also state a partial converse to Theorem \ref{thm: Delta}. A set of codewords $\{ X_i \}$ is {\it unambiguously distinguishable} if there exists a protocol with $(n+1)$ outcomes $\{ Y_i\}$ such that for each $i \le n$, the outcome $Y_i$ occurs with positive probability and implies that $X_i$ was sent. The outcome $Y_{n+1}$ is the error outcome and provides no conclusive information about the identity of $X_i$. A set of quantum states $\{ \ket{\psi_i} \}$ is unambiguously distinguishable if and only if they are linearly independent \cite{Chefles-1998}. If $\S = \{ \ket{\psi_i}= (I\otimes M_i\ket{\Phi}  \}$, a sufficient condition for unambiguous discrimination of $\S$ with one-way LOCC is the existence of a vector such that the vectors $\{M_i \ket{\phi} \}$ are linearly independent. This gives us the following:

\begin{thm}  Let $\mathfrak{S}$ be an operator system on $\mathbb{C}^d$ spanned by the pairwise products $\{ M_j^*M_i \}$, $i,j \in [n]$. If $\mathfrak{S}$ has a separating vector, then the bipartite states $\{ \ket{\psi_i}= (I\otimes M_i)\ket{\Phi}  \} \subseteq \C^d\otimes\C^d$ can be unambiguously distinguished using one-way LOCC.
\end{thm}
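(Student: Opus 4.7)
The plan is to reduce the conclusion to the sufficient condition stated in the paragraph immediately preceding the theorem: namely, to produce a vector $\ket{\phi}\in\C^d$ such that the family $\{M_i\ket{\phi}\}_i$ is linearly independent. Once such a $\ket{\phi}$ is in hand, Alice can include a rank-one POVM element proportional to $\ket{\bar\phi}\bra{\bar\phi}$ in her measurement (in the sense of Proposition~\ref{prop1}), and on that outcome Bob is left with the linearly independent ensemble $\{M_i\ket{\phi}\}$ (up to normalization), which he can then unambiguously discriminate by \cite{Chefles-1998}.

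The natural candidate is of course the separating vector $\ket{\phi}$ supplied by the hypothesis on $\mathfrak{S}$. To verify linear independence of $\{M_i\ket{\phi}\}$, suppose $\sum_i c_i M_i \ket{\phi}=0$ and set $B=\sum_i c_i M_i$. The point is that while $B$ itself need not belong to $\mathfrak{S}$, the positive operator
\[
B^*B \;=\; \sum_{i,j} \overline{c_j}\,c_i\, M_j^*M_i
\]
is by construction a linear combination of the spanning set $\{M_j^*M_i\}$, hence lies in $\mathfrak{S}$. Moreover $B^*B\ket{\phi}=B^*(B\ket{\phi})=0$, so the separating property of $\ket{\phi}$ for $\mathfrak{S}$ forces $B^*B=0$ and thus $B=0$. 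In other words, $\sum_i c_i M_i=0$.

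To upgrade this to $c_i\equiv 0$, I would invoke the standing distinguishability framework of the paper: for the states $\ket{\psi_i}=(I\otimes M_i)\ket{\Phi}$ to be unambiguously distinguishable at all they must be linearly independent, which via the $\vecop$ isomorphism is equivalent to linear independence of $\{M_i\}$. (If the $M_i$ were linearly dependent there would be nothing to prove, since unambiguous discrimination would be impossible in the first place.) With $\{M_i\}$ linearly independent, $B=0$ forces all $c_i=0$, completing the proof of linear independence of $\{M_i\ket{\phi}\}$ and hence of the theorem.

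The only non-routine move is recognizing that one should pass from $B=\sum c_i M_i$ to the positive element $B^*B$ in order to land inside the operator system $\mathfrak{S}$, so that the separating vector hypothesis can actually be applied; after that trick the argument is essentially a one-line use of the definitions together with the cited sufficient condition for unambiguous one-way LOCC discrimination.
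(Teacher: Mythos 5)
Your proposal is correct and follows the same overall strategy as the paper: take the separating vector $\ket{\phi}$, let Alice measure with the rank-one outcome $\kb{\overline{\phi}}{\overline{\phi}}$, observe that Bob then holds one of the vectors $M_i\ket{\phi}$, prove these are linearly independent, and invoke Chefles' criterion. The one place you diverge is in how linear independence of $\{M_i\ket{\phi}\}$ is extracted from the separating hypothesis: the paper asserts directly that $\{M_j^*M_i\ket{\phi}\}_{i}$ is linearly independent for fixed $j$ and deduces independence of $\{M_i\ket{\phi}\}$ by applying $M_j^*$, whereas you pass from $B=\sum_i c_iM_i$ to $B^*B=\sum_{i,j}\overline{c_j}c_iM_j^*M_i\in\mathfrak{S}$ and use $B^*B\ket{\phi}=0\Rightarrow B^*B=0\Rightarrow B=0$. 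Your route is actually the more careful one: the paper's step tacitly requires the operators $\{M_j^*M_i\}_{i}$ to be linearly independent, which is not automatic, while your argument needs only that $B^*B$ lies in the span of the generators. Both arguments, however, ultimately rest on linear independence of the $M_i$ themselves to conclude $c_i=0$; the paper leaves this implicit, and you flag it, though your phrasing ``there would be nothing to prove'' has the logic backwards --- if the $M_i$ were dependent the conclusion of the theorem would be false, not vacuous, so linear independence of the states (equivalently of the $M_i$, via $\vecop$) must be read as a standing assumption on the set $\S$ rather than a dismissible case. With that caveat made explicit, your proof is complete.
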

\begin{proof} If $\mathfrak{S}$ has a separating vector $\ket{\phi}$, then for any $j$, $\{ M_j^*M_i\ket{\phi} \}_{i = 1}^n$ is linearly independent which means that $\{ M_i\ket{\phi} \}$ are linearly independent. If Alice performs a measurement and gets the outcome $\ket{\overline{\phi} }\bra{\overline{\phi} }$, then Bob's system will be in the state $M_i\ket{\phi}$ for some $i$. Since the options are linearly independent, Bob can unambiguously distinguish them.\end{proof}

\section{Hilbert Module Structures from LOCC}\label{S:hilbertmodule}

In this section we note how an important special case in our analysis generates Hilbert $C^*$-module structures. Consider the case in one-way LOCC for which the number of states to be distinguished is equal to the dimension $d$ of the qudit base system; in other words $W$ is a unitary map or equivalently $r=d$ in the notation of Proposition~\ref{prop1}. Through a unitary change of basis induced by $W$ acting on the basis that defines the operation $\Delta$, we may assume $W=I$. Let $\mathfrak{A}$ be the $C^*$-algebra of operators on $\C^d$ with diagonal matrix representations in the basis that defines $\Delta$. Then $\mathfrak{S}$ given as above as the span of the operators $U_j^*U_i$ is a right (or left) $\mathfrak{A}$-module. We can define a map $\langle \cdot , \cdot \rangle : \mathfrak{S} \times \mathfrak{S} \rightarrow \mathfrak{A}$ by $\langle X , Y \rangle = \Delta(Y^* X)$, and one can check that this endows $\mathfrak{S}$ with the structure of a Hilbert $\mathfrak{A}$-module, for which the unitary operators $U_i$ form an orthogonal basis in this $\mathfrak{A}$-valued inner product.


 Given elements $U,V \in \mathfrak{S}$. We say that $U \sim V$ if there exists an invertible diagonal matrix $D$ such that $U = VD$. This is immediately seen to be an equivalence relation, one which respects orthogonality.

 \begin{lem}\label{LemmaEquivalence}
 Using the inner product defined above, the orthogonal complement is invariant under this equivalence relation: If $U \sim V$, then $\langle U,X \rangle = 0$ if and only if $\langle V,X \rangle = 0$ \end{lem}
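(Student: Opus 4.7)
The plan is to unwind both sides through the definition of the $\mathfrak{A}$-valued inner product and exploit the fact that multiplication by a diagonal matrix commutes with the diagonal projection $\Delta$. First I would write $\langle U, X \rangle = \Delta(X^*U)$ and substitute $U = VD$ to get $\langle U, X \rangle = \Delta(X^* V D)$.

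The key observation is that, for any matrix $A$ and any diagonal matrix $D$, we have $\Delta(AD) = \Delta(A)D$ (and similarly $\Delta(DA) = D\,\Delta(A)$). This is immediate from the fact that the $(k,k)$ entry of $AD$ is $A_{kk}D_{kk}$, i.e., the product of corresponding diagonal entries. Equivalently, this is the statement that $\Delta$ is an $\mathfrak{A}$-bimodule map, which is exactly what is needed to make $\langle \cdot, \cdot\rangle$ a well-defined $\mathfrak{A}$-valued inner product in the first place, as outlined in the paragraph preceding the lemma.

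Applying this observation yields $\langle U, X\rangle = \Delta(X^*V)D = \langle V, X\rangle D$. Since $D$ is invertible, $\langle V, X\rangle D = 0$ if and only if $\langle V, X\rangle = 0$, which gives the desired equivalence. I do not anticipate any real obstacle here: the proof is a two-line manipulation whose only content is the compatibility between $\Delta$ and multiplication by elements of $\mathfrak{A}$.
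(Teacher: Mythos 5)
Your proof is correct and is essentially identical to the paper's: both substitute $U=VD$, use $\Delta(AD)=\Delta(A)D$ for diagonal $D$ to get $\langle U,X\rangle = \langle V,X\rangle D$, and conclude from the invertibility of $D$. You simply spell out the justification of the key identity more explicitly than the paper does.
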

\begin{proof}
Consider the following calculation:
\be \langle U,X \rangle = \Delta(X^*U) =  \Delta(X^*VD) =  \Delta(X^*V)D = \langle V,X \rangle D , \ee
from which the lemma follows.
\end{proof}

This means that any  set which is orthogonal with respect to $\langle \cdot , \cdot \rangle$ can contain at most one representative of each equivalence class, putting a significant bound on the size of any orthogonal set. In particular, we immediately recover the standard bound on the number of LOCC-distinguishable maximally entangled states.

\begin{prop}\label{unitaries}
Let $\{U_k\}_{k = 1}^n$ be a set of $d \times d$ unitary matrices which are mutually orthogonal under the inner product $\langle \cdot , \cdot \rangle$. Then $n \le d$.
\end{prop}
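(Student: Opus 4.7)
The plan is to reduce the module-level orthogonality $\Delta(U_j^* U_i)=0$ to ordinary orthogonality of vectors in $\C^d$, taking advantage of the fact that $\Delta$ simply reads off the diagonal entries of a matrix in the fixed basis $\{\ket{l}\}_{l=1}^d$ that defines it.

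First I would unpack the hypothesis: for $i\neq j$, $\langle U_i,U_j\rangle = \Delta(U_j^* U_i) = 0$ means every diagonal entry of $U_j^* U_i$ is zero. The $(l,l)$-entry of $U_j^* U_i$ equals $\bra{l} U_j^* U_i \ket{l} = \langle U_j\ket{l},\, U_i\ket{l}\rangle_{\C^d}$. Fixing any basis label $l$ and setting $v_k := U_k\ket{l} \in \C^d$, this says $\langle v_j, v_k\rangle_{\C^d}=0$ for $j\neq k$, while unitarity of each $U_k$ gives $\|v_k\|=1$. Hence $\{v_k\}_{k=1}^n$ is an orthonormal system in $\C^d$, and therefore $n\le d$.

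There is essentially no technical obstacle; the content is the observation that the $\mathfrak{A}$-valued inner product records, column by column, the ordinary $\C^d$ inner products of the corresponding columns of the operators involved, so for unitary columns the problem collapses to a standard dimension-counting statement. One could alternatively route the argument through Lemma \ref{LemmaEquivalence}: when $U,V$ are unitary the scaling matrix $D=V^*U$ in the relation $U\sim V$ is forced to be a diagonal unitary, so each equivalence class of unitaries is parametrized, up to a $d$-torus acting on columns, by the direction of any fixed column; mutual orthogonality of representatives in $\langle \cdot, \cdot \rangle$ then yields $\C^d$-orthonormal vectors in that slot, again producing $n\le d$.
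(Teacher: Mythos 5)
Your proof is correct, and it is genuinely different from (and more elementary than) the one in the paper. You observe that the $(l,l)$ entry of $U_j^*U_i$ is exactly the $\C^d$ inner product of the $l$-th columns $U_j\ket{l}$ and $U_i\ket{l}$, so fixing any single column index $l$ already yields $n$ pairwise orthogonal unit vectors $U_1\ket{l},\dots,U_n\ket{l}$ in $\C^d$, whence $n\le d$. This uses only one diagonal entry of each product and exploits unitarity only to normalize the columns. The paper instead upgrades $\Delta$-orthogonality to Hilbert--Schmidt orthogonality, multiplies each $U_k$ on the right by $d$ pairwise trace-orthogonal diagonal unitaries $D_1,\dots,D_d$, and uses Lemma~\ref{LemmaEquivalence} to show the resulting $nd$ matrices are Hilbert--Schmidt orthogonal in the $d^2$-dimensional space $M_d$, forcing $nd\le d^2$. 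Your argument is shorter and more transparent for the unitary case; the paper's counting argument is the one that survives the passage to Proposition~\ref{prop: rank}, where the $M_k$ are not unitary, columns $M_kW_i$ may vanish or fail to be normalized, and the single-column trick no longer gives the rank-weighted bound $\sum_k r_k\le d^2$. Your side remark via Lemma~\ref{LemmaEquivalence} is consistent but not needed; the direct columnwise argument stands on its own.
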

\begin{proof}
If $\langle U_i,U_j \rangle =0$, then $\tr (U_j^*U_i) = \tr (\Delta(U_j^*U_i)) = 0$, and so orthogonality under $\langle \cdot , \cdot \rangle$ implies orthogonality in the Hilbert-Schmidt inner product. Let $\{D_1, D_2, \ldots, D_{d}\}$ be a set of diagonal unitary matrices with $\tr (D_j^*D_i) = d\,\delta_{i,j}$. We look at the set ${\mathcal B} = \{U_kD_i\}$ for $k \in [n]$ and $i \in [d]$. From the lemma, $\langle U_kD_i,U_lD_j \rangle = \langle U_k,U_l\rangle = 0$ if $k \neq l$, which means that $\tr ((U_lD_j)^*U_kD_i) = 0$.

On the other hand, $\tr ((U_kD_j)^*U_kD_i) = \tr (D_j^*D_i) =d \,\delta_{i,j}$. This implies that ${\mathcal B}$ is a set of $nd$ matrices which are orthogonal in the Hilbert-Schmidt inner product, giving us $\vert {\mathcal B} \vert = nd \le d^2$ and $n \le d$.
\end{proof}

We can generalize this to a stronger statement which appears to be not as well known. Here we define $\langle X , Y \rangle_W= \Delta(W^*Y^* XW)$ for any  $r \times d$ matrix $W$ with $r \ge d$. We say that a set of vectors in $\mathbb{C}^d$ is generic if any $d$ of them are linearly independent.
\begin{prop}\label{prop: rank}
Let $\{M_k\}_{k = 1}^n$ be a set of $d \times d$  matrices which are mutually orthogonal under the inner product $\langle \cdot , \cdot \rangle_W$. If for each $M_k$, we have $\mbox{rank}(M_k) = r_k$, then $\sum_k r_k \le d^2$ as long as the columns of $W$ are generic.
\end{prop}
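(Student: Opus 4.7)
The plan is to exploit the genericity of $W$'s columns to reduce the statement to a clean double-counting argument on nonzero columns, generalising the strategy behind Proposition~\ref{unitaries}. Since any $d$ columns of $W$ are linearly independent, I would first pick indices $i_1,\dots,i_d$ so that $B:=[\,w_{i_1}\mid\cdots\mid w_{i_d}\,]$ is an invertible $d\times d$ matrix, and set $V_k:= M_k B$ for each $k$. Because $B$ is invertible, $\mathrm{rank}(V_k)=\mathrm{rank}(M_k)=r_k$, so we have replaced our family of rank-$r_k$ matrices $M_k$ by a family of $d\times d$ matrices of the same rank but with coordinates adapted to a distinguished basis.

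Next I would unpack the hypothesis. The condition $\langle M_k,M_l\rangle_W = \Delta(W^* M_l^* M_k W)=0$ for $k\neq l$ is equivalent to $w_i^* M_l^* M_k w_i=0$ for every $i\in\{1,\dots,r\}$; restricting to the selected indices $i_j$ gives $(V_l e_j)^*(V_k e_j)=0$ for every $j\in\{1,\dots,d\}$ and $k\neq l$. In other words, for each fixed column index $j$, the $n$ vectors $\{V_k e_j\}_{k=1}^n$ form a pairwise orthogonal family in $\mathbb{C}^d$, so at most $d$ of them can be nonzero.

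The finish is a double count. Since $\mathrm{rank}(V_k)=r_k$, at least $r_k$ of the $d$ columns of $V_k$ must be nonzero (a zero column cannot participate in a linearly independent set). Hence
\[
\sum_{k=1}^n r_k \;\le\; \sum_{k=1}^n \#\{\,j : V_k e_j\neq 0\,\} \;=\; \sum_{j=1}^d \#\{\,k : V_k e_j\neq 0\,\} \;\le\; d\cdot d \;=\; d^2,
\]
which is exactly the claim. I do not expect a serious obstacle here: the only real subtlety is the choice to restrict to exactly $d$ columns of $W$ rather than using all $r$ of them, which is what makes the bound come out to $d^2$ rather than the weaker $rd$ one would get by naively summing over all diagonal entries. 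This restriction is precisely what the genericity hypothesis buys us, since without it one could not guarantee a subset of $d$ columns spanning $\mathbb{C}^d$, and the rank-preservation $\mathrm{rank}(V_k)=r_k$ would fail.
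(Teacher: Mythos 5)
Your proof is correct, but it takes a genuinely different route from the paper's. The paper mirrors the argument of Proposition~\ref{unitaries}: it keeps all $r$ columns of $W$, uses genericity to show that $\Delta(W^*M_k^*M_kW)$ has at least $r-(d-r_k)$ nonzero diagonal entries, multiplies each $M_kW$ on the right by $r-d+r_k$ diagonal matrices supported on those entries to produce a Hilbert--Schmidt-orthogonal family of $n(r-d)+\sum_k r_k$ matrices inside the $rd$-dimensional space of $d\times r$ matrices, and then rearranges the resulting inequality (after disposing separately of the case $n\le d$). You instead restrict at the outset to $d$ of the columns, forming an invertible $B$, and run a direct double count on the nonzero columns of $V_k=M_kB$: for each fixed column index the vectors $\{V_ke_j\}_k$ are pairwise orthogonal in $\mathbb{C}^d$, hence at most $d$ are nonzero, while $\mathrm{rank}(V_k)=r_k$ forces at least $r_k$ nonzero columns of $V_k$. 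Your argument is shorter and more elementary --- no case split on $n$ versus $d$, no construction of the auxiliary diagonal matrices --- and it in fact needs only that $W$ have rank $d$ (so that \emph{some} $d$ columns are independent), a hypothesis weaker than genericity; the paper's proof uses genericity essentially, to bound the number of vanishing diagonal entries of $W^*M_k^*M_kW$ by $\dim\ker M_k$. What the paper's version buys is the sharper intermediate inequality $n(r-d)+\sum_k r_k\le rd$, which is strictly stronger than $\sum_k r_k\le d^2$ when $n>d$ and $r>d$, and a presentation that stays within the Hilbert-module orthogonality framework of the preceding propositions.
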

\begin{proof}
We first note that since each $r_k \le d$, we have $\sum_k r_k \le dn \le d^2$ as long as $n \le d$. So we need only consider the case $n>d$. As before, orthogonality under $\langle \cdot , \cdot \rangle_W$ implies that the set $\{M_kW\}$ is mutually orthogonal in the Hilbert-Schmidt inner product.
For each $k$, if $(W^*M_k^*M_kW)_{ii} = 0$ then $W_i \in \ker(M_k)$. Since the columns of $W$ are generic, the number of zeroes is bounded by the dimension of the kernel of $M_k$. Hence, the matrix $\Delta(W^*M_k^*M_kW)$ has at least $r - (d-r_k)$ nonzero entries. We can then find a set of $(r-d+r_k)$ diagonal matrices $D_{k,i}$ with support on the nonzero entries of $\Delta(W^*M_k^*M_kW)$ such that $\tr (D_{kj}^*W^*M_k^*M_kWD_{ki}) = \delta_{i,j} \tr W^*M_k^*M_kW$; hence these are orthogonal to each other and must be linearly independent.

This implies that ${\mathcal B} = \{M_kWD_{k,j}\}$ is a set of $n(r-d) + \sum_k r_k$ matrices which are orthogonal in the Hilbert-Schmidt norm on $r \times d$ matrices, giving us $\vert {\mathcal B} \vert = n(r-d) + \sum_k r_k \le rd$ and \bee \sum_k r_k \le (d-r)n + rd=   d^2 + (r-d)(d-n) \le d^2\eee since $n > d$ and $r \ge d$, and this completes the proof.
\end{proof}
\section{Conclusion}

This investigation was initially motivated by an attempt to better understand the mathematical foundation that underlies the operator relations  characterizing LOCC, and in particular the one-way classical communication version of the subject. Somewhat to our surprise, we found a variety of operator structures present in the background; specifically, operator algebras, operator systems, and Hilbert $C^*$-modules. Adding to this perspective some tools from matrix theory, we were able to establish some new results on perfect distinguishability of quantum states under different communication schemes, and we discovered new derivations for some established results and dimension bounds in the field.

The identification of operator structures and the introduction of operator and matrix theoretic approaches has paid dividends over the past decade and a half in a growing number of areas in quantum information; including now quantum error correction, quantum privacy, and the study of Bell inequalities via operator space techniques to name a few. We view this work as the initiation of a potentially fruitful line of investigation motivated by this operator and matrix theoretic perspective, and as an invitation for other researchers to participate and explore.

\vspace{0.1in}

{\noindent}{\it Acknowledgements.} D.W.K. was partly supported by NSERC and a University Research Chair at Guelph. C.M. was partly supported by a University of Guelph scholarship and the African Institute for Mathematical Sciences. R.P. was partly supported by NSERC. M.N. acknowledges the ongoing support of the Saint Mary's College Office of Faculty Research. The authors are grateful for the organizers of ILAS2016 in Leuven, Belgium, where this collaboration was initiated.

\end{document}